\begin{document}

\title{Computing Square Roots in Quaternion Algebras}

\author{Przemys{\l}aw Koprowski\thanks{Address for correspondence:  Institute of Mathematics,
           University of Silesia,  ul. Bankowa 14, 40-007 Katowice, Poland. \newline \newline
                    \vspace*{-6mm}{\scriptsize{Received January 2023; \ accepted September 2023.}}}
\\
Institute of Mathematics \\
University of Silesia in Katowice \\
ul. Bankowa 14, 40-007 Katowice, Poland\\
przemyslaw.koprowski@us.edu.pl
}

\maketitle

\runninghead{P. Koprowski}{Computing Square Roots in Quaternion Algebras}

\begin{abstract}
We present an explicit algorithmic method for computing square roots in quaternion algebras over global fields of characteristic different from~$2$.
\end{abstract}

\begin{keywords}
square root computation, quaternion algebra, number fields, global fields
\end{keywords}

\section{Introduction}
The computation of square roots is one of the most basic operations in mathematics. Effective methods for computing square roots are among the oldest algorithms in the realm of computational mathematics. In fact, Heron's method for a numerical approximation of a square root of a real number is two thousand years old and preceded by the Euclidean algorithm (wildly believed to be the oldest mathematical algorithm) by only about three to four centuries (for an in-depth discussion on the chronology see \cite{Heath1981}). Although numerous methods for computing square roots in various algebraic structures are known nowadays, some important omissions prevail. Among them are general quaternion algebras. Computation of square roots in the algebra of Hamilton quaternions $\HH = \quat{-1}{-1}{\RR}$ is well-known (see \cite{Niven1942}) and very simple as for every quaternion $\qq\in \HH$ there is a subfield $K \cong \CC$ of~$\HH$ containing $\qq$, and so the computation of the square root in~$\HH$ can be reduced to the computation of the square root in~$\CC$. It is no longer so in a general quaternion algebra $\CQ = \quat{\alpha}{\beta}{K}$ for an arbitrary field~$K$ and two elements $\alpha, \beta\in \un$. To the best of our knowledge, no algorithm for computing quaternionic square roots exists in the literature. One possible explanation for this (quite surprising) fact is that in the commutative case when one considers a field extension $\ext{L}{K}$, a typical way to compute a square root of an element $a\in L$ is to factor the polynomial $x^2 - a$ in $L[x]$. However, for quaternion algebras, there are no known polynomial factorization algorithms.

The sole purpose of this paper is to correct this evident omission and present an explicit algorithm for computing square roots in quaternion algebras over arbitrary global fields of characteristic different from~$2$.

\section{Notation}
Throughout this paper, $K$ will denote an arbitrary global field of characteristic $\chr K\neq 2$. Hence, $K$ is either a number field, i.e. a finite extension of~$\QQ$ (then its characteristic is just~$0$) or a global function field, that is, a finite extension of a rational function field over a finite field~$\FF_q$, where $q$ is a power of an odd prime. The set of nonzero elements of~$K$ is denoted~$\un$.

\medskip
Recall that a quaternion algebra $\CQ = \quat{\alpha}{\beta}{K}$ over~q
$K$ is a $4$-dimensional $K$-algebra with a basis $\{1, \ii, \jj, \kk\}$ and a multiplication gathered by the rules:
\[
\ii^2 = \alpha,\quad
\jj^2 = \beta,\quad
\ii\jj = \kk = -\jj\ii.
\]
As usual, we shall identify the field~$K$ with the subfield $K\cdot 1$ of~$\CQ$, which is known to coincide with the center $Z(\CQ)$ of~$\CQ$. We refer the reader to \cite{Vigneras1980, Voight2021} for a comprehensive presentation of the theory of quaternion algebras.

\medskip
A quaternion $\qq$ is called \term{pure} (see e.g., \cite[Definition~5.2.1]{Voight2021}) if $\qq\in \lin\{\ii, \jj, \kk\}$. Every quaternion $\qq\in \CQ$ can be uniquely expressed as a sum $\qq = a + \qq_0$ of a scalar $a\in K$ and a pure quaternion~$\qq_0$. We write $\conj{\qq} := a - \qq_0$ for the \term{conjugate} of~$\qq$. The map that sends a quaternion to its conjugate is an involution.

If $x$ is an element of either a quadratic field extension $L = K\bigl(\sqrt{\alpha}\bigr)$ of~$K$ or a quaternion algebra $\CQ = \quat{\alpha}{\beta}{K}$ over~$K$, we write $\norm[]{x} := x\conj{x}$ and call it the \term{norm} of~$x$. If the domain is not clear from the context, we write $\norm[\ext{L}{K}]{}$ or $\norm[\ext{\CQ}{K}]{}$.

\begin{remark}
When $\CQ$ is a quaternion algebra, the norm of~$\qq$ in the above sense should not be confused with the determinant of the endomorphism of~$\CQ$ defined by the multiplication by~$\qq$, which is often also called the norm. For this reason, in \cite{Vigneras1980, Voight2021} the map $\qq\mapsto \qq\conj{\qq}$ is called the \term{reduced norm} and denoted $\operatorname{nrd}$. In that manner, our terminology in the present paper agrees with the one used by Lam in \cite{Lam2005} but not with the one used by Vigneras in \cite{Vigneras1980} and Voight in \cite{Voight2021}.
\end{remark}

Equivalency classes of valuations on~$K$ are called \term{places}. Throughout this paper, places are denoted using fraktur letters $\gp$, $\gq$, $\gr$. Every place of a global field is either \term{archimedean}, when it extends the standard absolute value on~$\QQ$ (then the field~$K$ is necessarily a number field) or \term{non-archimedean}. Over a global function field, every place is non-archimedean. To avoid monotonous repetitions, non-archimedean places will also be called \term{primes} (or \term{finite primes} when we want to emphasize the fact that they are non-archimedean). The completion of~$K$ with respect to a place~$\gp$ is denoted~$\Kp$. If $\gp$ is a finite prime, we write $\ord_\gp : K \to \ZZ$ to denote the corresponding (normalized) discrete valuation on~$K$. The prime $\gp$ is called dyadic if $\ord_\gp 2 \neq 0$. The map $\ord_\gp$ induces a natural map $\sqgp \to \quo{\ZZ}{2\ZZ}$ on the group of square classes of~$K$ that is again denoted $\ord_\gp$.

If $\gp$ is an archimedean place, then the completion~$\Kp$ is isomorphic either to $\CC$ or to~$\RR$. The places of the second kind are called \term{real}. The field~$K$ is \term{formally real} if $-1$ is not a sum of squares in~$K$. Otherwise, it is called \term{non-real}. It is well known that a global field is formally real if and only if it contains at least one real place. We write $\sgn_\gr a$ for the sign of $a\in K$ with respect to the unique ordering of~$K$ induced by a real place~$\gr$.

\medskip
Given some nonzero elements $a_1, \dotsc, a_n\in K$ we denote by $\form{a_1, \dotsc, a_n}$ the quadratic form $a_1x_1^2 + \dotsb + a_nx_n^2$. Further, if $\gp$ is a place and $a, b\in \un$ we write $(a, b)_\gp$ for the Hilbert symbol of~$a$ and $b$ at~$\gp$, that is
\[
(a, b)_\gp :=
\begin{cases}
1 &\text{if $\quat{a}{b}{\Kp}\cong M_2\Kp$,}\\
-1 &\text{otherwise.}
\end{cases}
\]
Here, $M_2\Kp$ denotes the ring of $2\times 2$ matrices with entries in~$\Kp$. The Hilbert symbol is symmetric, bi-multiplicative and for every $a, b\in \un$ and every place~$\gp$ one has $(a, b)_\gp\cdot (a, b)_\gp = 1$. These three properties of the Hilbert symbol will be extensively used in the paper.

\medskip
For a quadratic form $\xi = \form{a_1, \dotsc, a_n}$ we define its Hasse invariant $s_\gp\xi$ at~$\gp$ by the formula (see e.g., \cite[Definition~V.3.17]{Lam2005}):
\[
s_\gp\xi :=
\prod_{i < j} (a_i, a_j)_\gp.
\]
Given a quadratic form $\xi = \form{a_1, \dotsc, a_n}$ over~$K$ and a place~$\gp$, we write $\xi\otimes \Kp$ for the form over~$\Kp$ with the same entries as~$\xi$. If $\gr$ is a real place, the form $\xi\otimes K_\gr$ is called \term{definite} if all its entries $a_1, \dotsc, a_n$have the same sign. Otherwise, it is called \term{indefinite}.

\medskip
Finally, abusing the notation harmlessly, by $\lo$ we will denote the (unique) isomorphism from the multiplicative group $\{\pm 1\}$ to the additive group $\{0,1\}$ with addition modulo~$2$.

\section{Square roots of non-central elements}
Let us begin by writing down the explicit formula for a square in quaternion algebra so that we can easily reference it in the discussion that follows.

\begin{obs}
If $\qq = q_0 + q_1\ii + q_2\jj + q_3\kk\in \CQ$  is a quaternion, then
\begin{equation}\label{eq:q^2}
\begin{split}
\qq^2
&= (q_0^2 + q_1^2\alpha + q_2^2\beta - q_3^2\alpha\beta)
    + 2q_0q_1\ii + 2q_0q_2\jj + 2q_0q_3\kk
\\
&= ( 2q_0^2 - N(\qq) ) + 2q_0\cdot (q_1\ii + q_2\jj + q_3\kk).
\end{split}
\end{equation}
\end{obs}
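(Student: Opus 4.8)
The plan is to verify both displayed expressions by a direct expansion of the product $\qq\cdot\qq$, relying only on the multiplication table of~$\CQ$. First I would record the products of the basis elements that are not given outright by the defining relations: from $\ii^2 = \alpha$, $\jj^2 = \beta$ and $\ii\jj = \kk = -\jj\ii$ one computes $\kk^2 = (\ii\jj)(\ii\jj) = -\ii^2\jj^2 = -\alpha\beta$, together with $\ii\kk = \alpha\jj$, $\kk\ii = -\alpha\jj$, $\jj\kk = -\beta\ii$ and $\kk\jj = \beta\ii$. These six identities, all obtained by repeated use of the anticommutation $\ii\jj = -\jj\ii$, are the only ingredients needed.

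Next I would split $\qq = q_0 + \qq_0$ into its scalar part~$q_0$ and pure part $\qq_0 = q_1\ii + q_2\jj + q_3\kk$, and expand $\qq^2 = q_0^2 + q_0\qq_0 + \qq_0 q_0 + \qq_0^2$. Since $q_0$ is central, the two middle terms combine to $2q_0\qq_0 = 2q_0q_1\ii + 2q_0q_2\jj + 2q_0q_3\kk$, which already yields the $\ii$-, $\jj$- and $\kk$-components claimed in the first line of~\eqref{eq:q^2}. For the term $\qq_0^2$, the squares of the three basis vectors contribute $q_1^2\alpha + q_2^2\beta - q_3^2\alpha\beta$, while every mixed term cancels in pairs: $q_1q_2(\ii\jj + \jj\ii) = 0$, $q_1q_3(\ii\kk + \kk\ii) = 0$ and $q_2q_3(\jj\kk + \kk\jj) = 0$ by the six identities above. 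Hence $\qq_0^2 = q_1^2\alpha + q_2^2\beta - q_3^2\alpha\beta$ is a scalar, and adding~$q_0^2$ produces the scalar part of the first displayed line.

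Finally, to pass to the second line I would invoke the norm. Writing $\conj{\qq} = q_0 - \qq_0$ and using centrality of~$q_0$, one has $N(\qq) = \qq\conj{\qq} = (q_0 + \qq_0)(q_0 - \qq_0) = q_0^2 - \qq_0^2 = q_0^2 - (q_1^2\alpha + q_2^2\beta - q_3^2\alpha\beta)$, so that $2q_0^2 - N(\qq) = q_0^2 + q_1^2\alpha + q_2^2\beta - q_3^2\alpha\beta$ is exactly the scalar part found above. Substituting this together with $2q_0\qq_0$ for the pure part rewrites~\eqref{eq:q^2} in the compact form $\qq^2 = \bigl(2q_0^2 - N(\qq)\bigr) + 2q_0\qq_0$. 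There is no genuine obstacle here, as the statement is a bookkeeping identity; the one point that deserves care is the sign in $\kk^2 = -\alpha\beta$ and the accompanying signs in $\ii\kk$, $\kk\ii$, $\jj\kk$, $\kk\jj$, since a slip there would corrupt both the $-q_3^2\alpha\beta$ coefficient and the pairwise cancellation of the mixed terms.
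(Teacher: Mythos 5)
Your proof is correct, and it is exactly the routine expansion the paper intends: the observation is stated without proof as a direct computation, and your verification via the basis products $\kk^2 = -\alpha\beta$, $\ii\kk = -\kk\ii = \alpha\jj$, $\jj\kk = -\kk\jj = -\beta\ii$, the decomposition $\qq = q_0 + \qq_0$, and the identity $N(\qq) = q_0^2 - \qq_0^2$ fills in precisely the omitted bookkeeping, with all signs handled correctly.
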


An immediate consequence of the previous observation is the following rather well-known fact.

\begin{corollary}\label{cor:square_of_pure}
If $\qq\in \CQ$ is a pure quaternion, then $\qq^2\in Z(\CQ)= K$.
\end{corollary}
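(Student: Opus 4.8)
The plan is to read the claim off directly from the squaring formula in equation~\eqref{eq:q^2}. By definition a pure quaternion lies in $\lin\{\ii,\jj,\kk\}$, which is to say its scalar coordinate vanishes: $q_0 = 0$. First I would substitute this into the second line of~\eqref{eq:q^2}. The pure part there is $2q_0\cdot(q_1\ii + q_2\jj + q_3\kk)$, so it collapses to zero the moment $q_0 = 0$, and what survives is simply $\qq^2 = -N(\qq)$. Since the norm takes its values in $K$ and $K = Z(\CQ)$ under the identification fixed in the setup, this already places $\qq^2$ in $Z(\CQ)$, which is the assertion.

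An equivalent, coordinate-free route I would at least mention uses the conjugation involution. For a pure quaternion one has $\conj{\qq} = -\qq$, whence $\qq^2 = -\qq\conj{\qq} = -N(\qq)$, again landing in $K$. This reformulation makes transparent \emph{why} the statement is true: squaring a pure quaternion amounts, up to a sign, to evaluating the norm, and the norm is central by construction. Either argument is essentially a one-line specialization of the preceding observation, so I would present whichever reads more cleanly in context.

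I do not anticipate any genuine obstacle here; the single point worth stating explicitly is that the vanishing of $q_0$ annihilates the \emph{entire} pure component of $\qq^2$, not merely part of it. The one subtlety I would be careful to record is that the characteristic hypothesis $\chr K \neq 2$ plays no role in this particular statement: the factor $2$ multiplying the pure part is killed by $q_0 = 0$ rather than by any division, so the corollary holds verbatim in arbitrary characteristic. That observation is harmless to note and clarifies exactly where the computation is and is not sensitive to the standing assumptions.
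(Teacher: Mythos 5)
Your proposal is correct and takes the same route as the paper, which presents the corollary as an immediate consequence of Eq.~\eqref{eq:q^2}: setting $q_0=0$ kills the pure component $2q_0(q_1\ii+q_2\jj+q_3\kk)$ and leaves $\qq^2 = -N(\qq)\in K$. Your side remarks (the $\conj{\qq}=-\qq$ reformulation and the observation that $\chr K\neq 2$ is not needed here) are accurate but go beyond what the paper records.
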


Another direct consequence of Eq.~\eqref{eq:q^2} is the following observation that may be treated as a partial converse of Corollary~\ref{cor:square_of_pure}.

\begin{obs}\label{obs:sqrt_in_K}
Let $\qq\in \CQ$ be a square root of some element $a\in K$. Then $\qq$ is either pure or $\qq\in K$.
\end{obs}

\begin{proof}
Let $\qq = q_0 + q_1\ii + q_2\jj + q_3\kk$. If $\qq^2 = a\in K$ then by Eq.~\eqref{eq:q^2} we have
\[
2q_0q_1 = 2q_0q_2 = 2q_0q_3 = 0.
\]
Therefore, if $\qq$ is not pure, that is if $q_0 \neq 0$, then $q_1 = q_2 = q_3 = 0$, hence $\qq\in K$.
\end{proof}

Combining Corollary~\ref{cor:square_of_pure} with Observation~\ref{obs:sqrt_in_K} we see that for computing the square roots in quaternion algebras it is crucial to distin­guish between the case when one computes a quaternionic square root of an element in~$K$ (i.e., in the center of~$\CQ$)  and the case when the argument comes from $\CQ\setminus Z(\CQ)$. It turns out that the latter case is, in fact, trivial and requires nothing more than high-school mathematics.

\begin{alg}
Let $\CQ = \quat{\alpha}{\beta}{K}$ be a quaternion algebra over a field~$K$ of characteristic $\chr K\neq 2$. Given a quaternion $\qq = q_0 + q_1\ii + q_2\jj + q_3\kk\in \CQ\setminus Z(\CQ)$, this algorithm outputs its square root or reports a failure when $\qq$ is not a square.
\begin{enumerate}
\itemsep=0.95pt
\item Check if the norm $N(\qq)$ of~$\qq$ is a square in~$K$.
  \begin{enumerate}
  \item\label{st:sqrt_not_pure:early} If it is not, then report a failure and quit.
  \item If it is, let $d$ be an element of~$K$ such that $d^2 = N(\qq)$.
  \end{enumerate}
\item Check if any of the following two elements is a square in~$K$:
\[
a_+ := \frac{q_0 + d}{2},\qquad a_- := \frac{q_0 - d}{2}.
\]
\item If neither of them is a square, then report a failure and quit.
\item Otherwise, fix $r_0$ such that either $r_0^2 = a_+$ or $r_0^2 = a_-$.
\item Set
\[
r_1 := \frac{q_1}{2r_0},\quad
r_2 := \frac{q_2}{2r_0},\quad
r_3 := \frac{q_3}{2r_0}.
\]
\item Output $\rr = r_0 + r_1\ii + r_2\jj + r_3\kk$.
\end{enumerate}
\end{alg}

\begin{poc}
Since the norm $N:\CQ\to K$ is multiplicative, it is obvious that if $N(\qq)\notin K^2$, then $\qq$ cannot be a square in~$\CQ$. This fact justifies the early exit in step~\eqref{st:sqrt_not_pure:early} of the algorithm. Assume that $N(\qq) = d^2$ and let $\rr = r_0 + r_1\ii + r_2\jj + r_3\kk$ be the sought square root of~$\qq$, if it exists. By Eq.~\eqref{eq:q^2} we have
\[
q_1 = 2r_0r_1,\qquad
q_2 = 2r_0r_2,\qquad
q_3 = 2r_0r_3.
\]
It is, thus, clear that it suffices to find $r_0$. Again by Eq.~\eqref{eq:q^2} we may write
\[
q_0
= r_0^2 + r_1^2\alpha + r_2^2\beta - r_3^2\alpha\beta
\\
= r_0^2 + \Bigl(\frac{q_1}{2r_0}\Bigr)^2\alpha + \Bigl(\frac{q_2}{2r_0}\Bigr)^2\beta - \Bigl(\frac{q_3}{2r_0}\Bigr)^2\alpha\beta.
\]
\eject
The above formula can be rewritten in the form of a bi-quadratic equation:
\[
4r_0^4 - 4q_0r_0^2 + \bigl(q_1^2\alpha + q_2^2\beta - q_3^2\alpha\beta\bigr) = 0.
\]
If we treat the left-hand-side as a quadratic equation in $r_0^2$, then its discriminant equals $16\cdot N(\qq) = (4d)^2$, hence
\[
r_0^2 = \frac{q_0 \pm d}{2} = a_\pm.
\]
It follows that the sought quaternion~$\rr$ exists if and only if either $a_+$ or $a_-$ is a square in~$K$. This proves the correctness of the algorithm.
\end{poc}

\begin{remark}\label{rem:num_of_sqrt}
In the above proof, we constructed the square root~$\rr$ of a quaternion $\qq\in \CQ\setminus Z(\CQ)$ by solving a bi-quadratic equation. Such equations in general, may have four roots. Hence, one may suspect that there are four distinct quaternions~$\rr$ such that $\rr^2 = \qq$. It is not the case. It is clear from the above proof that $\qq\in \CQ\setminus Z(\CQ)$ has only finitely many square roots in~$\CQ$. Now, if $\rr^2\in \QQ$, then $\rr$ is a root of a quaternionic polynomial $x^2 - \qq$. But \cite[Theorem~5]{GM1965} asserts that a quadratic polynomial over~$\CQ$ which has more than two zeros must have infinitely many of them. This way, we conclude that $\qq$ has just two square roots. Notice that for hamiltonian quaternions this fact has been observed already 80 years ago by Niven in \cite{Niven1942}.
\end{remark}

\section{Square roots of central elements. Split case}
It is evident from the preceding section that the only non-trivial case that must be considered is how to compute a quaternionic square root of an element of the base field~$K$, which is not a square in~$K$. In contrast to the previous case (cf. Remark~\ref{rem:num_of_sqrt}), in general, an element $a\in K = Z(\CQ)$ may have infinitely many square roots in~$\CQ$. Once again, for hamiltonian quaternions it has been observed already by Niven.

First, we need, however, to introduce an auxiliary algorithm that is not specific to quaternions, as it deals with an arbitrary quadratic form. Recall that a quadratic form is called \term{isotropic} (see e.g., \cite[Definition~I.3.1]{Lam2005}) if it represents zero non-trivially. It is well known (see, e.g., \cite[Theorem~I.3.4]{Lam2005}) that every isotropic form represents all elements of~$K$.

\begin{alg}\label{alg:isotropic2universal}
Let $\xi$ be an isotropic quadratic form of dimension~$n$ over a field~$K$ of characteristic $\chr K\neq 2$. Given an element $a\in K$ and a vector $V\in K^n$ such that $\xi(V) = 0$, this algorithm outputs a vector $W\in K^n$ satisfying the condition $\xi(W) = a$.
\begin{enumerate}
\itemsep=0,9pt
\item Find a vector $U\in K^n$ such that $U$ and $V$ are linearly independent.
\item Set $b := \xi(U)$ and $c := \sfrac12\cdot \bigl(\xi(U + V) - \xi(U)\bigr)$.
\item Output
\[
W := U + \frac{a - b}{2c}\cdot V.
\]
\end{enumerate}
\end{alg}

\begin{poc}
Just compute:
\begin{eqnarray*}
\xi(W)
&=& \xi\Bigl(U + \frac{a-b}{2c}\cdot V\Bigr)
\\[2pt]
&=& \xi(U) + \frac{a-b}{2c}\cdot \bigl(\xi(U + V) - \xi(U) - \xi(V)\bigr) + \frac{(a-b)^2}{4c^2}\xi(V)
\\[2pt]
&=& b + \frac{a - b}{2c}\cdot 2c + 0
= a
\end{eqnarray*}

\vspace*{-8mm}
\end{poc}

Recall that a quaternion algebra $\CQ = \quat{\alpha}{\beta}{K}$ is said to \term{split} (see e.g., \cite[Definition~5.4.5]{Voight2021}) if $\CQ$ is isomorphic to the matrix ring $M_2K$. It is well known (see e.g., \cite[Theorem~5.4.4]{Voight2021} or \cite[Theorem~III.2.7]{Lam2005}) that $\CQ$ is split if and only if the quadratic form $\form{-\alpha, -\beta, \alpha\beta}$ is isotropic. If it is the case, the preceding algorithm combined with Eq.~\eqref{eq:q^2} lets us compute the quaternionic square root of any element of the base field. In particular, when $K$ is a global field, $\chr K\neq 2$, then the computation of the square root of $a\in K$ in a split quaternion algebra boils down to solving a norm equation in a quadratic extension of~$K$. Algorithms for the latter task are well known. They can be found in \cite{Cohen2000, FJP1997, FP1983, Garbanati1980, Simon2002}.

\begin{alg}\label{alg:sqrt_split}
Let $\CQ = \quat{\alpha}{\beta}{K}$ be a split quaternion algebra over a global field~$K$ of characteristic $\chr K\neq 2$. Given a nonzero element $a\in K$, this algorithm outputs a pure quaternion $\qq\in \CQ$ such that $\qq^2 = a$.
\begin{enumerate}
\itemsep=0.95pt
\item\label{st:sqrt_split:v1} Check if $\alpha$ is a square in~$K$. If there is $c\in \un$ such that $c^2 = \alpha$, then set $V := (0, c, 1)$.
\item\label{st:sqrt_split:v2} Otherwise, if $\alpha \notin \sq$, then:
  \begin{enumerate}
  \item Construct a quadratic field extension $L = K\bigl(\sqrt{\alpha}\bigr)$ of~$K$.
  \item\label{st:sqrt_split:norm} Solve the norm equation
  \[
  \norm[\ext{L}{K}]{x} = -\frac{\alpha}{\beta}
  \]
  and denote the solution by $\lambda = b + c\sqrt{\alpha}$.
  \item Set $V := (1, b, c)$.
  \end{enumerate}
\item\label{st:sqrt_split:w} Let $\xi := \form{-\alpha, -\beta, \alpha\beta}$ be the pure subform of the norm form of~$\CQ$. Execute Algorithm~\ref{alg:isotropic2universal} with the input $(-a, V, \xi)$ to construct a vector $W = (w_1, w_2, w_3)$ such that $\xi(W) = -a$.
\item Output $\qq = 0 + w_1\ii + w_2\jj + w_3\kk$.
\end{enumerate}
\end{alg}

\begin{poc}
We claim that the vector~$V$ constructed either in step~\eqref{st:sqrt_split:v1} or in step~\eqref{st:sqrt_split:v2} of the algorithm is an isotropic vector for $\xi$. First, suppose that $\alpha$ is a square in~$K$. Say $\alpha = c^2$ for some $c\in \un$. Then
\[
-\alpha\cdot 0^2 - \beta\cdot c^2 + \alpha\beta\cdot 1^2 = 0.
\]
Conversely, assume that $\alpha\notin \sq$ and so $L = K\bigl(\sqrt{\alpha}\bigr)$ is a proper extension of~$K$. Let $\lambda = b + c\sqrt{\alpha}$ be an element of~$L$ such that $\norm\lambda = -\sfrac\alpha\beta$. Then
\[
-\frac\alpha\beta = \lambda\conj\lambda = b^2 - \alpha c^2.
\]
It follows that
\[
-\alpha\cdot 1^2 - \beta\cdot b^2 + \alpha\beta\cdot c^2 = 0.
\]
Hence, in both cases $V$ is an isotropic vector of~$\xi$, as claimed. Consequently, executing Algorithm~\ref{alg:isotropic2universal} in step~\eqref{st:sqrt_split:w} we obtain a vector~$W$ satisfying the condition $\xi(W) = -a$. Now, by Eq.~\eqref{eq:q^2} the square of the quaternion~$\qq$ outputted by the algorithm equals
\[
\qq^2 = -N(\qq) = -\xi(W) = a.
\]
Thus, to conclude the proof, we only need to show that the norm equation in step~\eqref{st:sqrt_split:norm} is solvable. But this follows immediately from the fact that $\CQ$ is split. Hence $\xi$ is isotropic. Indeed, if $V = (v_1, v_2, v_3)$ is an isotropic vector of~$\xi$, then
\[
-\alpha\cdot v_1^2 - \beta\cdot v_2^2 + \alpha\beta\cdot v_3^2 = 0.
\]
Observe that $v_1$ must be nonzero since otherwise, $\alpha$ would be a square. It follows that
\[
-\frac\alpha\beta
= \Bigl(\frac{v_2}{v_1}\Bigr)^2 - \alpha \Bigl(\frac{v_3}{v_1}\Bigr)^2
= \norm[\ext{L}{K}]{ \frac{v_2}{v_1} + \frac{v_3}{v_1}\sqrt\alpha }.
\]
Therefore, the norm equation is solvable, as claimed.
\end{poc}

\begin{remark}
The construction of the isotropic vector~$V$ in steps~(\ref{st:sqrt_split:v1}--\ref{st:sqrt_split:v2}) of Algorithm~\ref{alg:sqrt_split} is equivalent to establishing an explicit isomorphism $\CQ\cong M_2K$. For details, see \cite[Chapter~III]{Lam2005}. Of course, if the quaternion algebra~$\CQ$ is fixed, the vector~$V$ should be computed only once and cached between successive computations of square roots.
\end{remark}

\begin{remark}
If the isomorphism $\CQ\cong M_2K$ is a priori known explicitly, then the computation of the quaternionic square root of any $a\in \un$ trivializes, as we have the identity
\[
\begin{pmatrix} 0 & a\\ 1 & 0\end{pmatrix}^2
=
\begin{pmatrix} a & 0\\ 0 & a\end{pmatrix}.
\]
\end{remark}

\section{Square roots of central elements. Non-split case}
Now the only case left to be dealt with is when $a\in \un$ but $\CQ$ is not split. Here we have to solve not one but two norm equations (see Algorithm~\ref{alg:nonsplit_sqrt} below). First, however, we need to introduce the following auxiliary algorithm that constructs an element simultaneously represented by two binary forms. Recall (see e.g., \cite[Definition~I.2.1]{Lam2005}) that for a given quadratic form~$\xi$ of dimension~$d$, we denote the set of nonzero elements of~$K$ represented by $\xi$ by the symbol
\[
D_K(\xi) := \bigl\{ \xi(V)\mid V\in K^d\text{ and }\xi(V)\neq 0 \bigr\}.
\]

Let $\GP$ be any finite set of primes of~$K$. Recall that an element $a\in \un$ is called \term{$\GP$-singular} if $\ord_\gp a\equiv 0\pmod{2}$ for all finite primes $\gp\notin \GP$. The set of all $\GP$-singular elements forms a subgroup of the group $\un$ containing $\sq$. Thus, the notion of $\GP$-singularity generalizes naturally to the square classes. Define the set
\[
\SingS := \bigl\{ a\sq\st a\text{ is $\GP$-singular} \bigr\}
\]
of $\GP$-singular square classes. It is a subgroup of the group $\sqgp$ of square classes of~$K$, hence a vector space over~$\FF_2$. It is known that the dimension of this vector space is finite. In fact it equals (see e.g., \cite[p.~607]{Czogala2001})
\[
\dim_{\FF_2}\SingS = \card{\GP} + \dim_{\FF_2} \quo{C_\GP}{C_\GP^2},
\]
where $C_\GP$ is the $\GP$-class group of~$K$. There is a number of known algorithms to construct a basis of this vector space. For details see e.g., \cite{CBFS2015, Koprowski2021, KR2023}.

\medskip
Before we present the next algorithm it is crucial to point out that the set of non-archimedean places of a global field~$K$ is countable. All the places of~$K$ can be arranged into an infinite sequence $\gq_1, \gq_2,\dotsc$. One possible way to do that is the following one. If $K$ is a number field, let $p_1, p_2, p_3, \dotsc = 2, 3, 5, \dotsc$ be the (strictly increasing) sequence of all prime numbers. On the other hand, if $K$ is a global function field, i.e., a finite extension of a rational function field $\FF_q(x)$, let $p_1 = \sfrac1x$ and $p_2, p_3, p_4, \dotsc$ be a sequence of all the irreducible polynomials from $\FF_q[x]$ ordered in such a way that $\deg p_j\leq \deg p_{j+1}$ for every~$j$. Now, we can first take the places of~$K$ that extend $p_1$, then the ones that extend $p_2$, then $p_3$, and so on. Consequently, it is possible to iterate over the set of primes of~$K$. This observation will be indispensable for the rigorous proof of correctness of the algorithm that follows.

\begin{alg}\label{alg:intersection}
Let $K$ be a global field of characteristic $\chr K\neq 2$. Given two binary quadratic forms $\xi = \form{x_0, x_1}$ and $\zeta = \form{z_0, z_1}$ over~$K$ with $x_0, x_1, z_0, z_1 \neq 0$, this algorithm outputs a nonzero element $d\in \un$ such that $d\in D_K(\xi)\cap D_K(\zeta)$ or reports a failure if there is no such~$d$.
\begin{enumerate}
\item\label{st:intersection:trivial} If $-x_0x_1$ is a square in~$K$, then output $z_0$ and quit.
\item\label{st:intersection:trivial2} Likewise, if $-z_0z_1$ is a square in~$K$, then output $x_0$ and quit.
\item\label{st:intersection:early_exit} Check \textup(using e.g., \cite[Algorithm~5]{KCz2018}\textup) whether the form
\[
\xi\perp (-\zeta) = \form{x_0, x_1, -z_0, -z_1}
\]
is isotropic. If it is not, then report a failure and quit.
\item\label{st:intersection:GP} Construct a set $\GP$ consisting of all dyadic places of~$K$ \textup(if there are any\textup) and of all these non-dyadic primes of~$K$ where at least one of the elements $x_0, x_1, z_0, z_1$ has an odd valuation.
\item\label{st:intersection:GR} If $K$ is a formally real number field, then:
  \begin{enumerate}
  \item Construct the set~$\GR$ of all the real places of~$K$, where either $\xi$ or $\zeta$ is definite and denote its cardinality by~$r$, i.e.
  \[
  \GR = \bigl\{ \gr \st \sgn_\gr x_0x_1 = 1
  \text{ or }
  \sgn_\gr z_0z_1 = 1 \bigr\},
  \qquad
  r = \card{\GR}.
  \]
  \item{} \textup[Notation only\textup] Let $\gr_1, \dotsc, \gr_r$ be all the elements of~$\GR$.
  \item Construct a vector $W = (w_1, \dotsc, w_r)\in \{0,1\}^r$ setting
  \[
  w_i =
  \begin{cases}
  \lo \sgn_{\gr_i}x_0 & \text{if }\sgn_{\gr_i}x_0x_1 = 1,\\
  \lo \sgn_{\gr_i}z_0 & \text{if }\sgn_{\gr_i}x_0x_1 = -1.
  \end{cases}
  \]
  \end{enumerate}
Otherwise, if the field~$K$ is non-real, set $\GR := \emptyset$, $r = 0$ and $W := ()$.
\item Repeat the following steps until the sought element~$d$ is found:
  \begin{enumerate}
  \item{} \textup[Notation only\textup] Let $\gp_1, \dotsc, \gp_s$ be all the elements of~$\GP$.
  \item Construct a basis $\SB = \{\beta_1, \dotsc, \beta_k\}$ of the group~$\SingS$ of $\GP$-singular square classes.
  \item\label{st:intersection:UV} Construct vectors $U = (u_1, \dotsc, u_s)$ and $V = (v_1, \dotsc, v_s)$ setting
  \[
  u_i = \lo (x_0, x_1)_{\gp_i}
  \qquad\text{and}\qquad
  v_i = \lo (z_0, z_1)_{\gp_i}.
  \]
  \item\label{st:intersection:AB} Construct matrices $A = (a_{ij})$ and $B = (b_{ij})$, with $k = \card{\SB}$ columns and $s = \card{\GP}$ rows, setting
  \[
  a_{ij} = \lo ( -x_0x_1, \beta_j)_{\gp_i}
  \qquad\text{and}\qquad
  b_{ij} = \lo ( -z_0z_1, \beta_j)_{\gp_i}.
  \]
  \item If $\GR\neq \emptyset$ construct a matrix $C = (c_{ij})$ with $k$ columns and $r = \card{\GR}$ rows, setting
  \[
  c_{ij} = \lo \sgn_{r_i} \beta_j.
  \]
  Otherwise, when $\GR = \emptyset$, set $C = ()$.
  \item Check if the following system of $\FF_2$-linear equations has a solution 
  {
  \newlength{\tagmarginsep}
  \setlength{\tagmarginsep}{2cm}
  \everydisplay{\displayindent=\tagmarginsep \displaywidth=\dimexpr\linewidth-2\tagmarginsep}
  \begin{equation}\tag{$\Qoppa$}\label{eq:intersection}
  \mbox{}\hspace{15mm}\left(\begin{array}{c}
  A\\ \hline B\\\hline C
  \end{array}\right)
  \cdot
  \begin{pmatrix}
  x_1 \\ \vdots\\ x_k
  \end{pmatrix}
  =
  \left( \begin{array}{c}
         U\\
         \hline V\\
         \hline W
  \end{array}\right)
  \end{equation}
  }
  \item If it does, denote the solution by $(\varepsilon_1, \dotsc, \varepsilon_k)\in \{0, 1\}^k$. Output $d = \beta_1^{\varepsilon_1}\dotsb \beta_k^{\varepsilon_k}$ and quit.
  \item\label{st:intersection:new_prime} If the system~\eqref{eq:intersection} has no solution, then append to~$\GP$ the first prime~$\gq_j$ of~$K$ that is not yet in~$\GP$ \textup(see the comment preceding the algorithm\textup) and reiterate the loop.
  \end{enumerate}
\end{enumerate}
\end{alg}

\begin{poc}
First, suppose that $-x_0x_1$ is a square in~$K$. This means that the form~$\xi$ is isotropic (see, e.g., \cite[Theorem I.3.2]{Lam2005}). Hence, by \cite[Theorem~I.3.4]{Lam2005} it represents every element of~$K$. In particular, it represents~$z_0$. Since $\zeta$ also represents $z_0$ (trivially), step~\eqref{st:intersection:trivial} of the algorithm outputs the correct result. The same argument also applies to step~\eqref{st:intersection:trivial2}, when it is the form~$\zeta$ that is isotropic. It is also clear that the sets $D_K(\xi)$ and $D_K(\zeta)$ of elements represented by $\xi$ and $\zeta$, intersect if and only if $\xi\perp (-\zeta)$ is isotropic. This justifies the test in step~\eqref{st:intersection:early_exit}. Therefore, without loss of generality, for the remainder of the proof, we may assume that $\xi\perp (-\zeta)$ is isotropic while both forms $\xi$ and $\zeta$ are anisotropic.

We will first show that the algorithm terminates. Let $W = (w_0, w_1, w_2, w_3)\in K^4$ be an isotropic vector of $\xi\perp (-\zeta)$. Denote $e: = \xi(w_0, w_1) = \zeta(w_2, w_3)$. Further, let $\GR$ and $\GP$ be the sets of places (real and non-archimedean, respectively) constructed in steps (\ref{st:intersection:GP}--\ref{st:intersection:GR}) of the algorithm. We shall now apply \cite[Lemma~2.1]{LW1992}. In the notation of \cite{LW1992} we take $S$ to be the union of~$\GP$ and the set of all non-archimedean places of~$K$. In particular, $S$ contains~$\GR$. For every prime $\gp\in \GP$ we set $n(\gp) := 1 + \ord_\gp 4$. For real places $\gr\in \GR$, take $n(\gr) := 1$. For non-archimedean places $\gp\notin \GR$, the choice of $n(\gp)$ is irrelevant. As in \cite{LW1992}, let $\mathfrak{m} = \prod_{\gp\in S} \gp^{n(\gp)}$ be a modulus. Next, take $b_\gp := e$ for every $\gp\in S$. Moreover, if $K$ is a global function field, pick one more prime not in~$S$ and denote\footnote{This prime is denotes~$p_0$ in \cite{LW1992}, but we will not use this symbol as it would contradict the notation in the rest of the proof.} it~$\gp_*$ and set the corresponding exponent to be~$2$. Then \cite[Lemma~2.1]{LW1992} asserts that there exists a finite prime~$\gp_0$ of~$K$ (denoted~$q$ in \cite{LW1992}) and an element $d\in \un$ (denoted~$b$ ibid) such that:
{
\renewcommand{\theenumi}{\roman{enumi}}
\begin{enumerate}
\itemsep=0.98pt
\item $\ord_\gp d = 0$ for every finite prime $\gp\notin \GP\cup \{\gp_0\}$, except that if $K$ is a global function field, at the singled out prime~$\gp_*$ the valuation $\ord_{\gp_*} d$ is even but possibly nonzero;
\item\label{it:intersection:LST} $d\equiv e\pmod{ \gp^{1 + \ord_\gp 4} }$ for every $\gp\in \GP$;
\item $\ord_{\gp_0} d = 1$;
\item $\sgn_\gr d = \sgn_\gr e$ for every real places~$\gr$ of~$K$.
\end{enumerate}
}
Let $\SB = \{ \beta_1, \dotsc, \beta_k\}$ be a basis of the group $\Sing{\GP\cup \{\gp_0\}}$ of $\bigl(\GP\cup \{\gp_0\}\bigr)$-singular square classes. The element~$d$ is $\bigl(\GP\cup \{\gp_0\}\bigr)$-singular, hence it can be expressed in the form
\[
d = \beta_1^{\varepsilon_1}\dotsm \beta_k^{\varepsilon_k},
\]
where $\varepsilon_1, \dotsc, \varepsilon_k\in \FF_2$ are the coordinates of~$d$ with respect to~$\SB$.

\medskip
Fix a real place $\gr_i\in \GR$. First, suppose that $\sgn_{\gr_i}x_0x_1 = 1$, so the form $\xi\otimes K_{\gr_i}$ is definite. Then $\sgn_{\gr_i} (-d) = \sgn_{\gr_i} (-e) = \sgn_{\gr_i} x_0$ since $\form{-e, x_0, x_1}$ is isotropic. But this implies that
\[
\prod_{j = 1}^k (-1)^{c_{ij}\varepsilon_j}
= \prod_{j = 1}^k \sgn_{\gr_i}\beta_j^{\varepsilon_j}
= \sgn_{\gr_i} d
= \sgn_{\gr_i} x_0
= (-1)^{w_i}.
\]
Consequently
\begin{equation}\label{eq:intersection:C}
c_{i1}\varepsilon_1 + \dotsb + c_{ik}\varepsilon_k = w_i.
\end{equation}
Conversely, assume that $\xi\otimes \Kri$ is indefinite, hence $\zeta\otimes \Kri$ must be definite. Applying the same arguments to the form~$\zeta$ instead of~$\xi$, we show that Eq.~\eqref{eq:intersection:C} also holds in this case.

Now fix a finite prime $\gp_i\in \GP$. Observe that by the local square theorem (see, e.g., \cite[Theorem~VI.2.19]{Lam2005}) condition~\eqref{it:intersection:LST} implies that the local squares classes $d\sq[\Kpi]$ and $e\sq[\Kpi]$ coincide. It follows that the form
\[
\form{ -d, x_0, x_1 }\otimes \Kpi \cong \form{-e, x_0, x_1}\otimes \Kpi
\]
is isotropic. Now, \cite[Proposition~V.3.22]{Lam2005} asserts that the Hasse invariant of $\form{-d, x_0, x_1}\otimes \Kpi$ equals
\begin{equation}\label{eq:Hasse_to_isotropy}
s_{\gp_i}\form{-d, x_0, x_1} = (-1, x_0x_1\cdot d)_{\gp_i}.
\end{equation}
Using the definition of the Hasse invariant and properties of the Hilbert symbol we can rewrite the above condition as follows
\begin{align*}
1
&= s_{\gp_i}\form{-d, x_0, x_1}\cdot (-1, x_0x_1\cdot d)_{\gp_i}
\\
&= (-d, x_0)_{\gp_i} (-d, x_1)_{\gp_i} (x_0, x_1)_{\gp_i} (-1, x_0x_1)_{\gp_i} (-1, d)_{\gp_i}
\\
&= (-d, x_0x_1)_{\gp_i} (-1, x_0x_1)_{\gp_i} (-1, d)_{\gp_i} (x_0, x_1)_{\gp_i}
\\
&= (d, x_0x_1)_{\gp_i} (-1, d)_{\gp_i} (x_0, x_1)_{\gp_i}
\\
&= (-x_0x_1, d)_{\gp_i}(x_0, x_1)_{\gp_i}.
\end{align*}
Therefore, formula~\eqref{eq:Hasse_to_isotropy} is equivalent to the following one:
\[
(-x_0x_1, d)_{\gp_i} = (x_0, x_1)_{\gp_i}.
\]
Substituting $\beta_1^{\varepsilon_1}\dotsm \beta_k^{\varepsilon_k}$ for $d$ we obtain
\[
\prod_{j = 1}^k (-x_0x_1, \beta_j)^{\varepsilon_j}_{\gp_i}
= (x_0, x_1)_{\gp_i}.
\]
Now, $(x_0, x_1)_{\gp_i} = (-1)^{u_i}$ and $(-x_0x_1, \beta_j)_{\gp_i} = (-1)^{a_{ij}}$, where $u_i, a_{ij}\in \{0,1\}$ are the elements constructed in steps (\ref{st:intersection:UV}--\ref{st:intersection:AB}). Therefore, the last condition can be expressed as a linear equation over~$\FF_2$:
\begin{equation}\label{eq:intersection:A}
a_{i1}\varepsilon_1 + \dotsb + a_{ik}\varepsilon_k = u_i.
\end{equation}

Finally, we will show that the above equation also holds for the index $i = 0$, that is for the prime~$\gp_0$ appended to~$\GP$. This fact follows from Hilbert reciprocity law (see, e.g., \cite[Theorem~VI.5.5]{Lam2005}). We already know that for every $i\in \{1,\dotsc, s\}$ we have
\[
( -x_0x_1, d)_{\gp_i} = (x_0, x_1)_{\gp_i}.
\]
The same also holds for primes not in~$\GP$. Indeed, if $\gq\notin \GP\cup \{\gp_0\}$ then $\gq$ is non-dyadic and all three elements $x_0$, $x_1$ and $d$ have even valuations at~$\gq$. Consequently, by \cite[Corollary~VI.2.5]{Lam2005} one obtains
\[
(-x_0x_1, d)_\gq = (x_0, x_1)_\gq = 1.
\]
Now, by Hilbert reciprocity law, we can write
\begin{eqnarray*}
1
&=& \prod_{\gp} (-x_0x_1, d)_\gp\cdot \prod_\gp (x_0, x_1)_\gp
\\
&=& (-x_0x_1, d)_{\gp_0} (x_0, x_1)_{\gp_0}
    \cdot \prod_{\gp \in \GP} \Bigl((-x_0x_1, d)_\gp (x_0, x_1)_\gp\Bigr)
    \cdot \prod_{\mathclap{\gq \notin \GP\cup \{\gp_0\}}} \Bigl((-x_0x_1, d)_\gq (x_0, x_1)_\gq\Bigr)
\\
&=& (-x_0x_1, d)_{\gp_0} (x_0, x_1)_{\gp_0}.
\end{eqnarray*}
Hence, in the same way as above, we show that Eq.~\eqref{eq:intersection:A} also holds for $i = 0$. Applying the same arguments to the form~$\zeta$, we obtain
\begin{equation}\label{eq:intersection:B}
b_{i1}\varepsilon_1 + \dotsb + b_{ik}\varepsilon_k = v_i,
\end{equation}
for all $i \in \{0, 1, \dotsc, s\}$.

\medskip
All in all, we have proved that Eq.~\eqref{eq:intersection} has a solution in $\Sing{\GP\cup \{\gp_0\}}$. Now, for every $\GP'\supseteq \GP\cup \{\gp_0\}$ we have $\Sing{\GP\cup \{\gp_0\}}\subseteq \Sing{\GP'}$, hence once the prime~$\gp_0$ is appended to~$\GP$ the algorithm terminates (see also Remark~\ref{rem:termination} w below).

Now, when we have proved that the algorithm stops, we must show that it outputs a correct result. To this end, we will show that the forms $\form{-d, x_0, x_1}$ and $\form{-d, z_0, z_1}$ are locally isotropic in every completion of~$K$. The assumptions are symmetric with respect to both forms, except in real places. Hence it generally suffices to prove the isotropy of one of them.

Both forms are trivially isotropic in all complex completions of~$K$ (provided that there are any) and in all real completions $K_\gr$ for $\gr\notin \GR$. Fix now a real place $\gr_i\in \GR$. First, assume that the form $\form{x_0, x_1}\otimes \Kri$ is definite. From the preceding part we know that the element $d = \beta_1^{\varepsilon_1}\dotsm \beta_k^{\varepsilon_k}$, constructed by the algorithm, satisfies the condition $\sgn_{\gr_i} d = \sgn_{\gr_i} x_0$. Therefore the form $\form{ -d, x_0, x_1 }\otimes \Kri$ is isotropic. Now, the form $\xi\perp (-\zeta)$ is isotropic because otherwise, the execution of the algorithm would have been interrupted already in step~\eqref{st:intersection:early_exit}. Thus, either $\sgn_{\gr_i} z_0 = \sgn_{\gr_i}x_0 = \sgn_{\gr_i} d$ or $\sgn_{\gr_i} z_1 = \sgn_{\gr_i}x_0 = \sgn_{\gr_i} d$. In both cases, we have that the form $\form{-d, z_0, z_1}\otimes \Kri$ is isotropic, as well.
Conversely, assume that $\xi\otimes \Kri$ is indefinite, and so it is $\zeta\otimes \Kri$ that must be definite. Then, $\form{-d, x_0, x_1}\otimes \Kri$ is trivially isotropic and to the form $\form{-d, z_0, z_1}\otimes \Kri$ we apply the some argument as to the form $\form{-d, x_0, x_1}\otimes \Kri$ in the previous case.

We may now concentrate on finite primes. Fix a prime~$\gp$. Suppose $\gp$ is not among the primes constituting~$\GP$ (here, we allow $\GP$ to have been already enlarged during the execution of the algorithm). In that case, $\gp$ is certainly non-dyadic, and all three elements $x_0$, $x_1$, and $d$ have even valuations at~$\gp$. Hence, \cite[Corollary~VI.2.5]{Lam2005} asserts that $\form{ -d, x_0, x_1 }\otimes \Kp$ is isotropic. On the other hand, we know from the first part of the proof that if $\gp = \gp_i\in \GP$, then $d$ satisfies the condition $(-x_0x_1, d)_\gp = (x_0, x_1)_\gp$, which is equivalent to $s_\gp\form{-d, x_0, x_1} = (-1, x_0x_1\cdot d)_\gp$. The later condition implies that $\form{-d, x_0, x_1}\otimes \Kp$ is isotropic, again by \cite[Proposition~V.3.22]{Lam2005}. The very same arguments may be applied to the form $\form{-d, z_0, z_1}\otimes \Kp$.

All in all, we have shown that the forms $\form{-d, x_0, x_1}$ and $\form{-d, z_0, z_1}$ are locally isotropic in every completion of~$K$. Thus, they are isotropic over~$K$ by the Hasse--Min­kowski principle (see e.g., \cite[Theorem~VI.3.1]{Lam2005}). This means that the forms~$\xi$ and~$\zeta$ represent $d$ over~$K$ by \cite[Corollary~I.3.5]{Lam2005}.
\end{poc}

\begin{remark}\label{rem:termination}
To rigorously prove that Algorithm~\ref{alg:intersection} terminates, we used the fact that it is possible to iterate over the primes of~$K$ arranging all of them into a sequence. Hence, after finitely many steps the prime~$\gp_0$, specified in the proof of correctness, is appended to~$\GP$ and so the algorithm stops. However, it does not present a complete picture. We proved that the corresponding prime~$\gp_0$ exists using \cite[Lemma~2.1]{LW1992}. If one analyzes the proof of this lemma, one will realize that the authors rely on Chebotarev's density theorem to show that the set of primes satisfying the assertions of the lemma has positive density (hence is non-empty, consequently the corresponding prime exists). This means that in a practical implementation, in step~\eqref{st:intersection:new_prime} of the algorithm it is possible to actually add primes to~$\GP$ at random. If the density of the set mentioned above is $d\in (0, 1]$, then the probability that the system~\eqref{eq:intersection} fails to be solvable after $n$ steps is $(1-d)^n$, for sufficiently large~$n$. Hence, it diminishes expotentially with the number of iterations.
\end{remark}

We are now in a position to present an algorithm that computes a square root of a scalar in a non-split quaternion algebra.

\begin{alg}\label{alg:nonsplit_sqrt}
Let $\CQ = \quat{\alpha}{\beta}{K}$ be a non-split quaternion algebra over a global field of characteristic $\chr K\neq 2$. Given a nonzero element $a\in K$ this algorithm outputs a quaternion $\qq\in \CQ$ such that $\qq^2 = a$ or reports a failure if $a$ is not a square in~$\CQ$.
\begin{enumerate}
\itemsep=0.95pt
\item\label{st:nonsplit_sqrt:a_square} Check if $a$ is a square in~$K$. If there is $c \in \un$ such that $a = c^2$, then output $\qq = c + 0\ii + 0\jj + 0\kk$ and quit.
\item\label{st:nonsplit_sqrt:alpha_square} Check if $a\alpha$ is a square in~$K$. If there is $c\in \un$ such that $a\alpha = c^2$, then output $\qq = 0 + (\sfrac{c}{\alpha})\ii + 0\jj + 0\kk$ and quit.
\item\label{st:nonsplit_sqrt:aalpha_square} Check if $a\beta$ is a square in~$K$. If there is $c\in \un$ such that $a\beta = c^2$, then output $\qq = 0 + 0\ii + (\sfrac{c}{\beta})\jj + 0\kk$ and quit.
\item\label{st:nonsplit_sqrt:intersect} Execute Algorithm~\ref{alg:intersection} with input $\xi = \form{a, -\alpha}$ and $\zeta = \form{\beta, -\alpha\beta}$. If it fails, then report a failure and quit. Otherwise, let $d\in \un$ denote the outputted element represented by these two binary forms.
\item Construct two quadratic extensions of K:
\[
L := K\bigl(\sqrt{\alpha}\bigr)
\qquad\text{and}\qquad
M := K\bigl(\sqrt{a\alpha}\bigr).
\]
\item\label{st:nonsplit_sqrt:two_norms} Solve the following two norm equations:
\[
\frac{d}{\beta} = \norm[\ext{L}{K}]{x}
\qquad\text{and}\qquad
\frac{d}{a} = \norm[\ext{M}{K}]{y}.
\]
Denote the solutions by
\[
\lambda = l_0 + l_1\sqrt{\alpha}
\qquad\text{and}\qquad
\mu = m_0 + m_1\sqrt{a\alpha},
\]
respectively.
\item\label{st:nonsplit_sqrt:output} Output $\qq = 0 + a\cdot \frac{m_1}{m_0}\ii + \frac{l_0}{m_0}\jj + \frac{l_1}{m_0}\kk$.
\end{enumerate}
\end{alg}

\begin{poc}
The correctness of the results outputted in step~\eqref{st:nonsplit_sqrt:a_square} is  obvious as is the correctness of output of steps (\ref{st:nonsplit_sqrt:alpha_square}--\ref{st:nonsplit_sqrt:aalpha_square}). Indeed, if $a\alpha = c^2$ for some $c\in \un$ and $\qq = (\sfrac{c}{\alpha})\ii$, then $\qq^2 = \alpha\cdot \sfrac{c^2}{\alpha^2} = a$. In the remainder of the proof, we can, thus, assume that neither $a$ nor $a\alpha$ is a square in~$K$. Likewise, $\alpha$ is not a square, either, since otherwise, the quaternion algebra~$\CQ$ would split. Therefore, $L$ and $M$ are proper quadratic extensions of~$K$. It follows from Observation~\ref{obs:sqrt_in_K} that $a$ is a square of some pure quaternion $\qq = q_1\ii + q_2\jj + q_3\kk$ if and only if
\[
a\cdot 1^2 - \alpha\cdot q_1^2
=
\beta\cdot q_2^2 - \alpha\beta\cdot q_3^2.
\]
This equality is equivalent to the condition that the sets of elements of~$K$ represented by the binary forms $\xi = \form{a, -\alpha}$ and $\zeta = \form{\beta, -\alpha\beta}$ have a non-empty intersection. Thus, if Algorithm~\ref{alg:intersection} executed in step~\eqref{st:nonsplit_sqrt:intersect} reports a failure, then $a$ is not a square in~$\CQ$. Now, assume that Algorithm~\ref{alg:intersection} returned some element $d\in D_K(\xi)\cap D_K(\zeta)$. Then there are $l_0, l_1, m_0, m_1\in K$ such that
\[
\begin{cases}
d = am_0^2 - \alpha(am_1)^2 = a\cdot \norm[\ext{M}{K}]{ m_0 + m_1\sqrt{a\alpha} }\\
d = \beta l_0^2 - \alpha\beta l_1^2 = \beta\cdot \norm[\ext{L}{K}]{l_0 + l_1\sqrt{\alpha}}.
\end{cases}
\]
Rearranging the terms we have
\[
a = \alpha\Bigl( \frac{am_1}{m_0} \Bigr)^2 +  \beta\Bigl( \frac{l_0}{m_0} \Bigr)^2 -  \alpha\beta\Bigl( \frac{l_1}{m_0} \Bigr)^2.
\]
Now, the right-hand-side is nothing else but the square of the quaternion~$\qq$ constructed in step~\eqref{st:nonsplit_sqrt:output}. This proves that the algorithm is correct.
\end{poc}



\end{document}